\newtheorem*{theorem*}{Theorem}
\DeclareMathAlphabet{\mathpzc}{OT1}{pzc}{m}{it}
\DeclareMathAlphabet{\mathcalligra}{T1}{calligra}{m}{n}
\def\Tr{\operatorname{Tr}} \def\>{\rangle} \def\<{\langle}
 \def\id{\mathsf{id}}
 \def\mE{\mathcal{E}}
\def\mS{\mathcal{S}}
\def\sH{\mathcal{H}}  
\def\sS{{\boldsymbol{\mathsf{S}}}}  
\def\bound{\boldsymbol{\mathsf{L}}}
 \def\dec#1{\mathscr{D}}
 \def\openone{\mathds{1}}
\def\mP{\mathcal{P}}
\def\mW{\mathcal{W}}
\def\mR{\mathcal{R}}
\renewcommand{\qedsymbol}{\nobreak \ifvmode \relax \else
  \ifdim \lastskip<1.5em \hskip-\lastskip \hskip1.5em plus0em
  minus0.5em \fi \nobreak \vrule height0.75em width0.5em
  depth0.25em\fi}
\renewcommand{\ge}{\geqslant}
\renewcommand{\le}{\leqslant}
\newtheorem{theorem}{Theorem}
\theoremstyle{remark}
\theoremstyle{definition}
\newcommand{\bea}{\begin{eqnarray}}
\newcommand{\eea}{\end{eqnarray}}
\newcommand{\be}{\begin{equation}}
\newcommand{\ee}{\end{equation}}
\begin{document}



\title{On complete positivity, Markovianity, and the quantum data-processing inequality,\\ in the presence of initial system-environment correlations}

\author{Francesco Buscemi}

\affiliation{Graduate School of Information Science, Nagoya University, Chikusa-ku, Nagoya 464-8601, Japan}

	\email{buscemi@is.nagoya-u.ac.jp}






\begin{abstract}
	We show that complete positivity is not only sufficient but also necessary for the validity of the quantum data-processing inequality. As a consequence, the reduced dynamics of a quantum system are completely positive, even in the presence of initial correlations with its surrounding environment, if and only if such correlations do not allow any anomalous backward flow of information from the
  environment to the system. Our approach provides an intuitive information-theoretic framework to unify and extend a number of previous results.
\end{abstract}

\maketitle





In the case in which we are describing {global evolutions} as
``quantum processors'' or ``{input-output black boxes,}'' there is no doubt that the
only operationally, physically, and mathematically well-defined way to
proceed is that given by the formalism of {\textit{quantum operations}, in the sense of Kraus}~\cite{kraus,kraus2,nc}, i.e.,
completely positive (CP) linear
maps. As it turns out, quantum operations can always be
modeled as interactions of the input system with an environment,
initially factorized from (and independent of) the input system, and
discarded after the interaction took
place~\cite{stinespring,kraus,kraus2,ozawa,nc}. Such a model, however, is not
universally valid, but relies on an \emph{initial factorization
  condition}.

The question then naturally arises~\cite{pechukas, alicki, initial-corr, initial-corr2, initial-corr3,initial-corr4,initial-corr5,discord1,discord2,discord3,discord4}: what happens when the initial
factorization condition does not hold, namely, when system and
environment are, already before the interaction is turned on,
correlated? While this question arguably originated
from practical motivations (e.g., the difficulty to experimentally
enforce the initial factorization assumption), it soon moved to a more
fundamental level, in an attempt to challenge the very physical
arguments often put forth to promote CP dynamics as the only
``physically reasonable'' reduced dynamics. (On this point see, e.g.,
Refs.~\cite{kraus,kraus2}, but also Section 8.2.4 of~\cite{nc}). As one would
expect, by allowing the input system and its environment to start in a
correlated state, it is possible that the reduced dynamics of the
system are not CP anymore. The possibility of exploring phenomena
outside the CP framework attracted considerable
interest~\cite{initial-corr,initial-corr2,initial-corr3,initial-corr4,initial-corr5,discord1,discord2,discord3,discord4,thermo-viol,thermo-viol2,no-clon-viol,breuer,Rivas,Xu,Darrigo}, in particular in connection with the
possibility of circumventing thermodynamic or information-theoretic
\textit{tenet} like, e.g., the second law of thermodynamics (by
anomalous heat flow~\cite{thermo-viol,thermo-viol2}) or the data-processing
inequality (by anomalous increase of distinguishability~\cite{breuer}, {by entanglement revivals via local operations}~\cite{initial-corr2,Rivas,Xu,Darrigo},
or by violating the no-cloning theorem~\cite{no-clon-viol}). {In the language of the theory of open quantum systems, all such violations are interpreted as signatures of the fact that the underlying global evolution is \textit{non-divisible}~\cite{breuer,Rivas,Xu,Darrigo,Rivas-rev}, i.e., it cannot be decomposed into a chain of CP maps across successive time intervals. Fig.~\ref{fig:example} below illustrates a simple example of such a non-divisible evolution.}
\begin{figure}[bt]
	\includegraphics[width=8cm]{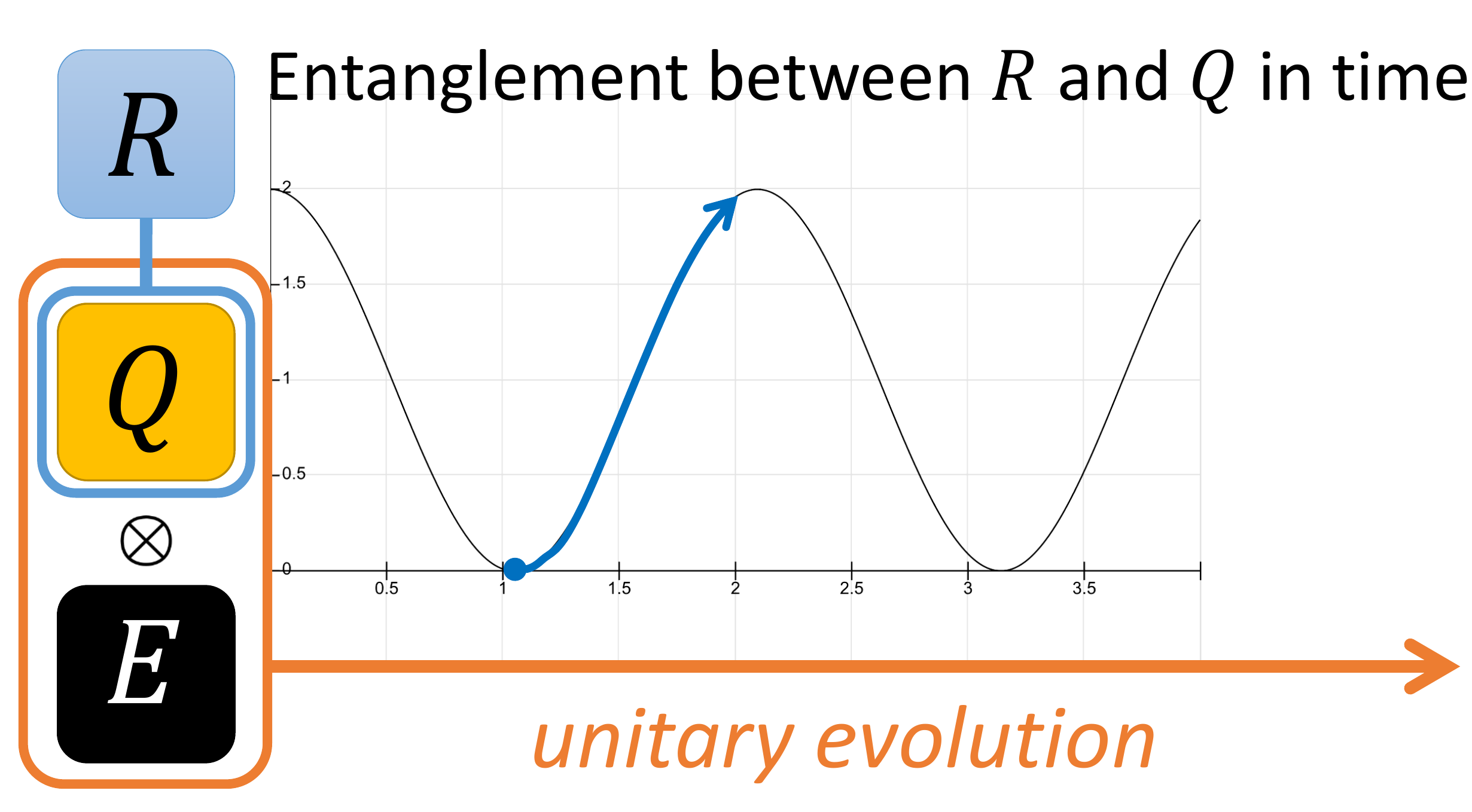}
	\caption{A simple example exhibiting non-CP reduced dynamics. The {quantum} system $Q$, initially entangled with a reference {quantum} system $R$, locally interacts with a {quantum} environment $E$, initially factorized from both $Q$ and $R$, while the reference remains isolated during the whole process. Nonetheless, as a consequence of the joint system-environment unitary evolution, the entanglement between $R$ and $Q$ may periodically oscillate between a maximum and a minimum value (the numerical values plotted are only indicative). An observer, having access to $R$ and $Q$ only, would notice periodic revivals of entanglement (the arrow in blue) between $R$ and $Q$ due to a \textit{local transformation} of $Q$ alone -- {a clear signature that the reduced dynamics of $Q$, within two instants in time during which such revivals occur, cannot be described by a CP map (i.e., locally in time, during the revivals, the evolution of $Q$ is \textit{non-divisible})}.}
	\label{fig:example}
\end{figure}
%

The study of non-CP dynamics naturally motivates also the
complementary search for conditions (to be satisfied by the
initial system-environment correlations) that guarantee CP reduced
dynamics. Within this trend we recall, in particular, the
Pechukas-Alicki debate about whether the initial factorization
condition is the only reasonable one to require (Ref.~\cite{alicki}
commenting on Ref.~\cite{pechukas}). More recently, attempts have been
made to show that CP reduced dynamics are fundamentally related with
initial system-environment correlations having vanishing quantum
discord: building on Ref.~\cite{discord1}, where it was proved that zero discord is sufficient for CP reduced dynamics, Ref.~\cite{discord2} later claimed the two conditions to be equivalent, but it turned out that the equivalence only holds for restricted spaces of initial states~\cite{discord3}. A rigorous framework for this problem has recently been provided in Ref.~\cite{discord4}, but
a complete characterization of which initial system-environment
correlations lead to CP reduced dynamics is not available yet.

Here, we propose a different way to tackle this problem starting from a simple initial idea:
if, on one hand, non-CP reduced dynamics lead to the violation of
data/energy-processing principles (as seen above), can complete
positivity, on the other hand, be recovered by \emph{assuming} that a
suitably chosen data-processing inequality \textit{always holds}? In other
words, is the absence of anomalous backward flows of information/heat
only necessary, or is it also sufficient for the complete positivity
of the reduced dynamics? Our answer, in the affirmative, not only
establishes a novel information-theoretic characterization of the
property of complete positivity, but provides, as a by-product, a
comprehensive description of which initial system-environment
correlations indeed lead to CP reduced dynamics.

Our approach, as in the example depicted in Fig.~\ref{fig:example}, does not focus solely on the initial system-environment
correlations, but brings into play a third system, a \emph{reference},
with respect to which both the property of complete positivity~\cite{choi} and the
data-processing inequality~\cite{schu-niel} can be conveniently phrased. Our results
are therefore formulated in terms of \emph{tripartite}, rather than bipartite, initial
  configurations: it is however a straightforward matter to go back,
from our tripartite framework, to the conventional bipartite
system-environment scenario.

\begin{figure}[b]
	\includegraphics[width=\columnwidth]{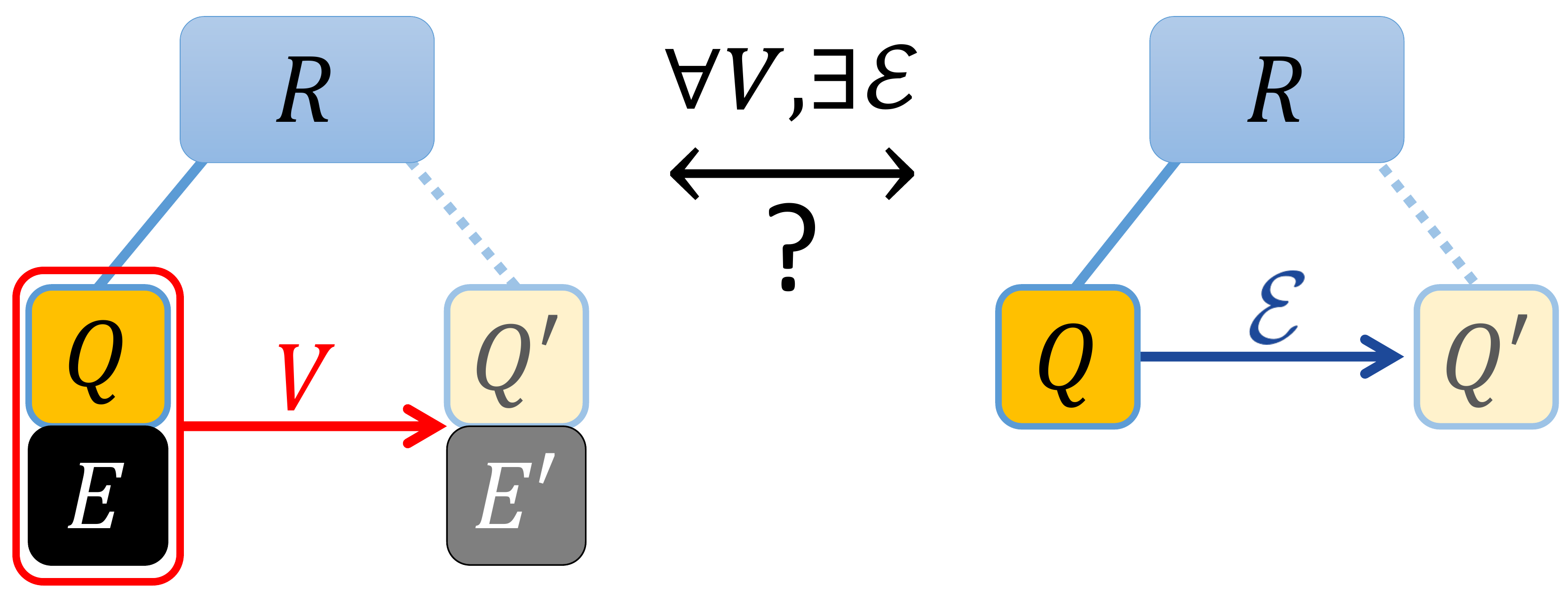}
	\caption{Our operational framework: given an initial tripartite configuration $\rho_{RQE}$, we ask whether, for any joint system-environment evolution $V:QE\to Q'E'$, there exists a completely positive, trace-preserving linear map $\mE$ describing the reduced dynamics $Q\to Q'$.}
	\label{fig:tripartite}
\end{figure}

To be more precise, the operational framework we adopt is the
following (see also Fig.~\ref{fig:tripartite} below):
\begin{enumerate}

\item {At some time $t=\tau$, we fix a tripartite configuration, i.e., an arbitrary tripartite
	density operator $\rho_{RQE}^\tau$, describing the initial correlations
	between the system $Q$, its environment $E$, and a reference $R$. The reference, reminiscent of the construction used
	by Choi~\cite{choi} to study completely positive maps, here plays the role
	of the ``blind and dead'' \emph{witness system} of
	Pechukas~\cite{pechukas}. It can be helpful to imagine that we are in a situation like that in Fig.~\ref{fig:example}, but ``freezed'' at some intermediate time $t=\tau$, when the correlations between $R$, $Q$, and $E$ are arbitrary.}

\item We move to the next instant in time, i.e., $t=\tau+\Delta$. We assume that the pair system-environment evolves from $\tau$ to $\tau+\Delta$ according to some unitary operator $V$, while the reference $R$ remains unchanged.
  
\item {Denoting by $Q'$ and $E'$ the system and the environment after the
	evolution described by $V$ has taken place, the tripartite configuration $\rho_{RQE}^\tau$
	has evolved to the tripartite configuration
	$\sigma_{RQ'E'}^{\tau+\Delta}=(\openone_R\otimes
	V_{QE})\rho_{RQE}^\tau(\openone_R\otimes V_{QE}^\dag)$.}

\item {We then look at the reduced
	reference-system dynamics (i.e., the transformation mapping $\rho_{RQ}^\tau$ to $\sigma_{RQ'}^{\tau+\Delta}$) and check whether they are compatible with the application of a completely positive trace-preserving linear map on
	the system $Q$ \emph{alone}. More explicitly, we check whether there exists a completely positive trace-preserving linear map
	$\mE$, mapping $Q$ to $Q'$, such that
	$\sigma_{RQ'}^{\tau+\Delta}=(\id_R\otimes\mE_Q)(\rho_{RQ}^\tau)$.}

\end{enumerate}

The main result of this paper is to prove that those
tripartite configurations, for which the reduced reference-system
dynamics are always (i.e., for all possible evolutions $V$)
compatible with the application of a completely positive trace-preserving map on the system $Q$ alone,
are exactly those tripartite configurations, for which the
reduced reference-system dynamics never violates the \textit{quantum data-processing
	inequality}~\cite{schu-niel}. This result is obtained by proving
that such tripartite configurations are exactly those constituting
\emph{short quantum Markov chains}~\cite{hayden-petz}, i.e., tripartite states such that
the quantum mutual information between the reference and the
environment, conditional on the system, is
zero. The rest of the paper is devoted to carefully
define all the above ideas, state the main results, and show how our tripartite scenario provides a way to unify and, at the same time, considerably
extend a number of results previously considered in the literature.

\medskip


\textit{Useful facts.}---In what follows, we only consider quantum
systems defined on finite dimensional Hilbert spaces $\sH$. We denote
by $\bound(\sH)$ the set of all linear operators acting on $\sH$, by
$\bound^+(\sH)\subset\bound(\sH)$ the set of all positive
semi-definite elements, and by $\sS(\sH)\subset \bound^+(\sH)$ the set
of all states $\rho$, i.e., operators with $\rho\ge 0$ and
$\Tr[\rho]=1$. The identity operator in $\bound(\sH)$ will be denoted
by the symbol $\openone$, whereas the identity map from $\bound(\sH)$
to $\bound(\sH)$ will be denoted by $\id$. In what follows, a
\emph{channel} is meant to be a linear map $\mE:\bound(\sH_Q)\to
\bound(\sH_{Q'})$, which is completely positive and trace-preserving
(CPTP) everywhere. For brevity, we will denote $\mE:\bound(\sH_Q)\to
\bound(\sH_{Q'})$ simply by $\mE:Q\to Q'$. Also, a \emph{joint evolution}
of systems $Q$ and $E$ will be meant to be an isometry
$V:\sH_Q\otimes\sH_E\to \sH_{Q'}\otimes\sH_{E'}$, or, for brevity,
$V:QE\to Q'E'$.


As anticipated above, we consider physical situations, in which the
global configuration can be divided into a reference $R$, the system
$Q$, and its surrounding environment $E$. A particular configuration, at some instant in time $\tau$ that we choose as the initial time for our analysis,
is then specified by assigning a tripartite state
$\rho_{RQE}\in\sS(\sH_R\otimes\sH_Q\otimes\sH_E)$. Among all
tripartite states, those that will play a central role here are the
so-called short quantum Markov chains (\emph{Markov states}, for
short), namely those states for which the conditional quantum mutual
information, defined as
\begin{equation}\label{eq:cqmi}
I(R;E|Q)_\rho=S(\rho_{RQ})+S(\rho_{QE})-S(\rho_{RQE})-S(\rho_Q),
\end{equation}
$S(\rho):=-\Tr[\rho\ \log\rho]$ being the usual von Neumann entropy,
is null. The key property of Markov states we need is the following~\cite{hayden-petz}: a tripartite state
$\rho_{RQE}$ is such that $I(R;E|Q)_\rho=0$ if and only if there exists a CPTP map $\mR:Q\to QE$ such that $\rho_{RQE}=(\id_R\otimes\mR_Q)(\rho_{RQ})$.
\medskip

\textit{The data-processing inequality.}---Consider three classical
random variables, $X$, $Y$, and $Z$, constituting a Markov chain $X\to Y\to Z$. Then, by defining the input-output mutual informations
$I(X;Y)=H(X)+H(Y)-H(XY)$ and, analogously, $I(X;Z)$, the data-processing inequality $I(X;Y)\ge
I(X;Z)$ holds~\cite{cover}. This formalizes the
intuition that any post-processing can only decrease the total amount
of information, or, in other words, that ``there are no free lunches
in information theory.''

The same intuition is indeed correct also in quantum information
theory and is formalized as follows.
One first introduces a reference system $R$, initially correlated
with the input quantum system $Q$, their joint state being denoted by
$\rho_{RQ}$. Then, the system $Q$ is fed into the channel $\mE:Q\to Q'$, while the reference is left
untouched. The joint state $\rho_{RQ}$ has correspondingly changed
into $\sigma_{RQ'}:=(\id_R\otimes\mE_Q)(\rho_{RQ})$. The quantum data-processing theorem~\cite{schu-niel} states that, for any $\rho_{RQ}\in\sS(\sH_R\otimes\sH_Q)$ and for any CPTP map
$\mE:Q\to Q'$, the following inequality holds:
\begin{equation}\label{eq:qdp}
	I(R;Q)_\rho\ge I(R;Q')_\sigma.
\end{equation}
where $I(R;Q)_\rho:=S(\rho_R)+S(\rho_Q)-S(\rho_{RQ})$ and
$I(R;Q')_\sigma:=S(\sigma_R)+S(\sigma_{Q'})-S(\sigma_{RQ'})$ are, respectively, the initial and final quantum mutual informations.

In other words, the quantum data-processing theorem states that any
local post-processing can only decrease the total amount of
correlations. This constitutes the formal reason why any revival of correlations, like that described in Fig.~\ref{fig:example}, falls outside the framework of CP dynamics.

The reference system $R$ plays, in the quantum
data-processing inequality, the same role that the input random
variable plays in the classical data-processing inequality. The
relation between these two views, seemingly rather different, is given
by the possibility of using the system $R$ to ``steer''~\cite{steering,steering2} different
states on $Q$ by means of local measurements on $R$ alone. The
set of states $\mS$ that can be steered on $Q$ from $R$ depends on the
initial correlations between $R$ and $Q$, i.e., it depends on the joint
state $\rho_{RQ}$, as follows:
\begin{equation}\label{eq:steering}
  \mS_Q(\rho_{RQ}):=\left\{\frac{\Tr_R[(P_R\otimes\openone_Q)\
  \rho_{RQ}]}{\Tr[(P_R\otimes\openone_Q)\
  \rho_{RQ}]}:P_R\in\bound^+(\sH_R)\right\}.
\end{equation}
We will use steering in order to recover, from the
tripartite scenario (reference-system-environment) employed in this
paper, the simpler and more conventional bipartite scenario (system-environment) usually
considered in the literature.\medskip

\textit{Assuming the data-processing inequality.}---Let the starting configuration be given by a tripartite state
$\rho_{RQE}$ shared between a reference $R$, the quantum system under
observation $Q$, and the environment $E$. Notice that we do not make
any assumptions on the initial correlations existing among $R$, $Q$,
and $E$. We then let the system $Q$ and the environment $E$ evolve jointly: this is
formalized by applying an isometry
$V:QE\to Q'E'$. We then trace out the
final environment system $E'$, focusing on the reduced dynamics $RQ\to
RQ'$ (see again Fig.~\ref{fig:tripartite}).

We begin with a definition: given an initial tripartite configuration $\rho_{RQE}$ and a joint
  system-environment evolution $V:QE\to Q'E'$, define the final state
  $\sigma_{RQ'E'}:= V_{QE}\rho_{RQE}V_{QE}^\dag$. We say that the
  reduced dynamics $Q\to Q'$ are {(globally)} CPTP if there exists a CPTP map
  $\mE:Q\to Q'$ such that
\begin{equation}\label{eq:red-CPTP}  
\sigma_{RQ'}=(\id_R\otimes\mE_Q)(\rho_{RQ}).
\end{equation}
We are now ready to state the main theorem (whose proof can be found in the Supplemental Material~\cite{suppl}):
\begin{theorem}\label{theo:main}
  Fix a tripartite configuration $\rho_{RQE}$. The following are
  equivalent:
\begin{enumerate}[label=\emph{(\alph*)}]
\item For any joint system-environment evolution $V:QE\to Q'E'$, the
  reduced dynamics $Q\to Q'$ satisfy the quantum data-processing
  inequality~(\ref{eq:qdp}).
\item $I(R;E|Q)_\rho=0$.
\item For any joint system-environment evolution $V:QE\to Q'E'$, the
  reduced dynamics $Q\to Q'$ are CPTP, in the sense
  of~(\ref{eq:red-CPTP}).
\end{enumerate}
\end{theorem}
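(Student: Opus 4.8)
\medskip

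\textit{Proof strategy.}---The plan is to establish the cycle of implications $(a)\Rightarrow(b)\Rightarrow(c)\Rightarrow(a)$; only the first needs a genuine idea, the other two being direct applications of facts already recalled above. For $(c)\Rightarrow(a)$: if, for a given $V$, the reduced dynamics are CPTP, so that $\sigma_{RQ'}=(\id_R\otimes\mE_Q)(\rho_{RQ})$ for some channel $\mE:Q\to Q'$, then the quantum data-processing theorem~(\ref{eq:qdp}) applies verbatim and gives $I(R;Q)_\rho\ge I(R;Q')_\sigma$; since this holds for every $V$, $(a)$ follows. For $(b)\Rightarrow(c)$: the structure theorem for Markov states recalled above turns $I(R;E|Q)_\rho=0$ into a channel $\mR:Q\to QE$ with $\rho_{RQE}=(\id_R\otimes\mR_Q)(\rho_{RQ})$; then, given any joint evolution $V:QE\to Q'E'$ and writing $\mathcal V(\cdot):=V(\cdot)V^\dag$, the map $\mE:=\Tr_{E'}\circ\mathcal V\circ\mR$ is a channel (a composition of channels), and a one-line computation gives $\sigma_{RQ'}=\Tr_{E'}\big[(\id_R\otimes\mathcal V_{QE})(\id_R\otimes\mR_Q)(\rho_{RQ})\big]=(\id_R\otimes\mE_Q)(\rho_{RQ})$, which is precisely~(\ref{eq:red-CPTP}).

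The heart of the matter is $(a)\Rightarrow(b)$, and the idea I would use is that the admissible joint evolutions are general enough to include one that merely ``relabels'' the pair $QE$ as the output system $Q'$. Concretely, pick $\sH_{Q'}$ isomorphic to $\sH_Q\otimes\sH_E$ via a fixed unitary $W$, take $\sH_{E'}$ one-dimensional, and let $V:=W$, viewed as an isometry $QE\to Q'E'$. For this $V$ one has $\sigma_{RQ'}=(\openone_R\otimes W)\,\rho_{RQE}\,(\openone_R\otimes W^\dag)$, so $S(\sigma_R)=S(\rho_R)$, $S(\sigma_{Q'})=S(\rho_{QE})$, and $S(\sigma_{RQ'})=S(\rho_{RQE})$, whence $I(R;Q')_\sigma=I(R;QE)_\rho$. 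Assumption~$(a)$, applied to this $V$, then reads $I(R;Q)_\rho\ge I(R;QE)_\rho$; expanding both mutual informations into von Neumann entropies and cancelling the common terms $S(\rho_R)$ and $S(\rho_Q)$ leaves exactly $I(R;E|Q)_\rho\le 0$. Since $I(R;E|Q)_\rho\ge 0$ by strong subadditivity, we conclude $I(R;E|Q)_\rho=0$, i.e.~$(b)$.

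I expect the main obstacle --- modest as it is --- to be exactly this last step: not the computation, which is one line of entropy bookkeeping, but recognizing that the evolution to test is the ``do-nothing'' isometry that absorbs the entire environment into the output $Q'$; once that is seen, the conclusion follows from the quoted data-processing theorem, the Hayden--Petz characterization of Markov states, and strong subadditivity, with nothing further needed. A small point worth checking along the way is that such a $V$ is indeed admissible under the paper's definition of a joint evolution --- an arbitrary isometry $\sH_Q\otimes\sH_E\to\sH_{Q'}\otimes\sH_{E'}$, with no constraint relating the output dimensions to the input ones --- which it is.
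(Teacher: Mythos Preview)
Your proof is correct and follows essentially the same route as the paper's: the cycle $(a)\Rightarrow(b)\Rightarrow(c)\Rightarrow(a)$, with the key step $(a)\Rightarrow(b)$ handled by choosing the embedding isometry $QE\hookrightarrow Q'$ (trivial $E'$) so that $I(R;Q')_\sigma=I(R;QE)_\rho$, and then combining the resulting inequality with strong subadditivity (the paper phrases the latter as data-processing under the partial trace over $E$, which is the same thing). One cosmetic slip: in your cancellation step only $S(\rho_R)$ is common to both mutual informations, not $S(\rho_Q)$; the identity you are actually using is $I(R;QE)_\rho-I(R;Q)_\rho=I(R;E|Q)_\rho$, which gives the stated conclusion.
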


The rest of the paper is devoted to showing how considerable parts of
the previous literature can be seen as special cases of the tripartite
scenario considered here. The central idea is that, considering a
tripartite configuration $\rho_{RQE}$ is essentially equivalent to
considering a \textit{whole family} of bipartite states on
$\sH_Q\otimes\sH_E$. Such a family is obtained by ``steering'' states
of $QE$ by measurements on $R$, as formally described in
Eq.~(\ref{eq:steering}). We start with a definition: we say that a
family of bipartite system-environment states
$\mS\subseteq\sS(\sH_Q\otimes\sH_E)$ is a \textit{post-selected} (or
  \textit{post-selectable}) family, if there exist an initial system $Q_0$, an
initial environment $E_0$, and a completely positive map
$\mP:Q_0E_0\to QE$ such that $\mS$ coincides with the output state
space of $\mP$, i.e.,
	\begin{equation}\label{eq:post-sele}
	\mS=\left\{\frac{\mP(\rho_{Q_0E_0})}{\Tr[\mP(\rho_{Q_0E_0})]}:\rho_{Q_0E_0}\in\sS(\sH_{Q_0}\otimes\sH_{E_0})\right\}.
	\end{equation}
It is important to stress that the class of post-selected families is very general: it contains, for example, all families that are obtained as mixtures of a finite or countable number of fixed states (i.e., polytopes), but also families of the form $\rho_Q\otimes\bar\sigma_E$, for varying $\rho$ and fixed $\bar\sigma$. It is also important to stress that the idea of post-selection introduced here is rather different from that considered by Alicki in~\cite{alicki}: while Alicki insists on post-selection operations embedding $Q$ into $QE$, here we lift such limiting assumption, allowing instead for a completely general initial system $Q_0E_0$.

We are now ready to state the second main result of this paper (whose proof, uses some ideas previously employed in Refs.~\cite{buscemi1,buscemi2}, can be found in the Supplemental Material~\cite{suppl}):

\begin{theorem}\label{theo:2}
  Let $\mS\subseteq\sS(\sH_Q\otimes\sH_E)$ be a post-selected family of initial
  bipartite system-environment states, possibly correlated, as in~(\ref{eq:post-sele}). The following are equivalent:
  \begin{enumerate}[label=\emph{(\alph*)}]
  	\item There exists a reference system $\sH_R$ and a tripartite Markov state
  	$\rho_{RQE}$ such that
  	$\mS_{QE}(\rho_{RQE})=\mS$.
  	\item For any joint system-environment evolution $V:QE\to Q'E'$, there exists a corresponding CPTP map $\mE:Q\to Q'$
  	such that
  	$\Tr_{E'}[V_{QE} \omega_{QE}V_{QE}^\dag]=\mE(\Tr_E[\omega_{QE}])$, for all
  	$\omega_{QE}\in \mS$.
  \end{enumerate}
\end{theorem}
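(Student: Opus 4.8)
The plan is to reduce Theorem~\ref{theo:2} to Theorem~\ref{theo:main} by exhibiting, for a given post-selected family $\mS$, a single tripartite state whose conditional mutual information $I(R;E|Q)$ vanishes precisely when the required CPTP compatibility holds. First I would unpack the direction (a)$\Rightarrow$(b). Suppose $\rho_{RQE}$ is a Markov state with $\mS_{QE}(\rho_{RQE})=\mS$. Given a joint evolution $V:QE\to Q'E'$, Theorem~\ref{theo:main} (b)$\Rightarrow$(c) yields a CPTP map $\mE:Q\to Q'$ with $\sigma_{RQ'}=(\id_R\otimes\mE_Q)(\rho_{RQ})$, where $\sigma_{RQ'E'}=V_{QE}\rho_{RQE}V_{QE}^\dag$. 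The point is then that this \emph{same} $\mE$ works for every $\omega_{QE}\in\mS$: any such $\omega_{QE}$ arises (up to normalization) as $\Tr_R[(P_R\otimes\openone_{QE})\rho_{RQE}]$ for some $P_R\ge 0$, and applying the same partial trace over $R$ to the identity $\sigma_{RQ'E'}=(\id_R\otimes(V\circ\ \cdot\ ))$ and using that $\mE$ commutes with steering on $R$, one gets $\Tr_{E'}[V_{QE}\omega_{QE}V_{QE}^\dag]=\mE(\Tr_E[\omega_{QE}])$. The only subtlety is normalization factors, which cancel because $\mE$ is linear and trace-preserving.

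For the converse (b)$\Rightarrow$(a), I would first use the post-selectability hypothesis: by~(\ref{eq:post-sele}) there is a CP map $\mP:Q_0E_0\to QE$ whose normalized output space is $\mS$. The natural candidate for the reference is to let $R$ carry a ``Choi-like'' copy of the input space $Q_0E_0$: concretely, fix a maximally entangled (or maximally mixed purification) state on $R\cong Q_0E_0$ together with $Q_0E_0$, apply $\mP$ to the $Q_0E_0$ part to land in $QE$, and call the result $\rho_{RQE}$. By construction $\mS_{QE}(\rho_{RQE})$ is exactly the set of steered outcomes, which coincides with $\mS$ (this is the content of the steering/post-selection correspondence, and is where ideas from Refs.~\cite{buscemi1,buscemi2} enter). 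It then remains to show $I(R;E|Q)_\rho=0$. Here I would invoke Theorem~\ref{theo:main} in the direction (c)$\Rightarrow$(b): I must check that for \emph{every} joint evolution $V:QE\to Q'E'$ the reduced dynamics $Q\to Q'$ of this particular $\rho_{RQE}$ are CPTP in the sense of~(\ref{eq:red-CPTP}). Given $V$, hypothesis (b) supplies a CPTP $\mE:Q\to Q'$ that reproduces $\Tr_{E'}[V\omega V^\dag]=\mE(\Tr_E[\omega])$ on all $\omega\in\mS$; I then need to ``lift'' this pointwise-on-$\mS$ identity to the operator identity $\sigma_{RQ'}=(\id_R\otimes\mE_Q)(\rho_{RQ})$. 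The mechanism is again steering: both sides, when steered by an arbitrary $P_R$ on $R$, produce the same state on $Q'$ (namely $\mE$ applied to the corresponding steered state of $Q$), and since $\rho_{RQ}$ (hence $\sigma_{RQ'}$) lives on the support determined by $R$, agreement of all steered outcomes forces equality of the full operators. A clean way to phrase this is: the map $X_R\mapsto \Tr_R[(X_R\otimes\openone)\rho_{RQ}]$ is injective on a suitable subspace, so matching images determines the preimage.

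The main obstacle I anticipate is precisely this last ``lifting'' argument — going from equality of all steered states on $Q'$ to the single operator equation~(\ref{eq:red-CPTP}), and doing so while keeping track of the support of $\rho_R$ (steering with $P_R$ only probes $\supp(\rho_R)$, so one must argue that nothing is lost outside it, e.g. by restricting $R$ to that support from the start). A secondary technical point is making the correspondence between the post-selected family~(\ref{eq:post-sele}) and the steered family~(\ref{eq:steering}) fully rigorous: one must show that \emph{every} post-selectable $\mS$ is realizable as $\mS_{QE}(\rho_{RQE})$ for some $\rho_{RQE}$ (the Choi-state construction above) and, conversely, that every steered family is post-selectable (immediate, taking $\mP$ to be the steering instrument followed by preparation). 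Once these two bridges are in place, Theorem~\ref{theo:2} follows from Theorem~\ref{theo:main} essentially by transport of structure, with no further entropic inequalities needed.
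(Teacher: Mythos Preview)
Your proposal is correct and follows essentially the same route as the paper: both directions are reduced to Theorem~\ref{theo:main} via the Choi-type construction $\rho_{RQE}=k(\id_R\otimes\mP)(\Phi^+_{RQ_0E_0})$, and the ``lifting'' you flag as the main obstacle is exactly what the paper handles by choosing a tomographically complete family $\{P_R^i\}\subset\bound^+(\sH_R)$ with dual operators $\{L_R^i\}$ so that $\rho_{RQE}=\sum_i L_R^i\otimes\Tr_R[(P_R^i\otimes\openone)\rho_{RQE}]$, which is the constructive form of your injectivity remark. Your worry about $\supp(\rho_R)$ does not arise in the paper's argument because the dual-basis identity holds for arbitrary operators on $\sH_R$, not just those supported on $\supp(\rho_R)$.
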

Theorem~\ref{theo:2} above states that, checking whether a given family of bipartite system-environment states gives rise to completely positive reduced dynamics, is equivalent to checking whether said family can be obtained by steering from a tripartite Markov state. While we cannot say that the latter condition is always easier to check than the former, it is fair to affirm that the latter \textit{is} very easy to check for the relevant examples presented so far in the literature.

\medskip\textit{First example: families with zero quantum discord.}---Let us consider bipartite system-environment
states of the form:
\begin{equation}\label{eq:c-q}
\rho_{QE}^p=\sum_{i=1}^n p_i |i_Q\>\<i_Q|\otimes\rho^i_E,
\end{equation}
where $\{|i_Q\>\<i_Q|\otimes\rho^i_E\}_{i=1}^n$ are $n$ fixed bipartite states such that $\<i_Q|j_Q\>=\delta_{ij}$, while
$p=(p_1,\cdots,p_n)$ is an arbitrary vector of probabilities. For varying $p$, Eq.~(\ref{eq:c-q}) defines a family of states, which we denote by $\mS_0$. Notice that the family $\mS_0$ is a finite polytope and, therefore, Theorem~\ref{theo:2} can be applied. The family $\mS_0$ was first considered, in relation with the problem of characterizing complete positivity, in Ref.~\cite{discord1}, where it is proved, by direct inspection of the coefficient matrix of the associated dynamical maps, that $\mS_0$ satisfies condition (b) of Theorem~\ref{theo:2}. In this case, however, condition (a) seems much easier to check, since a tripartite Markov state inducing $\mS_0$ by steering on $R$ is simply given by, e.g.,
\begin{equation*}
\rho_{RQE}=\frac 1n\sum_i
|i_R\>\<i_R|\otimes|i_Q\>\<i_Q|\otimes\rho^i_E,
\end{equation*}
for $\<i_R|j_R\>=\delta_{ij}$.\medskip

\textit{Second example: families with non-zero quantum discord}.---Due to the fact that states of the
form~(\ref{eq:c-q}) are also those (and only those) with zero quantum
discord~\cite{hayashi}, the absence of
quantum discord has been conjectured to be the essential ingredient for complete
positivity. Ref.~\cite{discord2} later claimed to have a proof of the conjecture, but that claim has recently been disproved in Ref.~\cite{discord3}. The counterexample provided in Ref.~\cite{discord3} is constructed from
bipartite system-environment states of the following form:
\begin{equation}\label{eq:counter}
\zeta_{QE}^p=p\zeta^{\alpha}_{QE}+(1-p)\zeta^{\beta}_{QE},
\end{equation}
where $\zeta^{\alpha}_{QE},\zeta^{\beta}_{QE}\in\sS(\sH_Q\otimes\sH_E)$ are chosen as
\[
\zeta^{\alpha}_{QE}=\frac12|0_Q\>\<0_Q|\otimes\rho^0_E+\frac12|+_Q\>\<+_Q|\otimes\rho^+_E,
\]
with $|+\>=2^{-1/2}(|0\>+|1\>)$ and $\rho^0_E\neq\rho^+_E$, and
\[
\zeta^{\beta}_{QE}=|2_Q\>\<2_Q|\otimes\rho^2_E.
\]
The polytope of states obtained from~(\ref{eq:counter})
for $p$ varying in $[0,1]$ indeed satisfies condition (b) of Theorem~\ref{theo:2}, even though its elements can have non-zero quantum discord~\cite{discord3}. Ref.~\cite{discord3}, in order to prove (b), employs the same technique used in Ref.~\cite{discord1}, i.e., a direct inspection of the coefficient matrix of the associated dynamical maps. Also in this case, condition (a) seems much easier to check, simply by considering a tripartite Markov state of the form, e.g.,
\begin{equation}\label{eq:zeta}
\zeta_{RQE}=\frac12|\alpha_R\>\<\alpha_R|\otimes\zeta^\alpha_{QE}+\frac12|\beta_R\>\<\beta_R|\otimes\zeta^\beta_{QE},
\end{equation}
with $\<\alpha_R|\beta_R\>=0$.
For the sake of brevity, we moved the proof that $\zeta_{RQE}$ indeed is a Markov state to the Supplemental Material~\cite{suppl}. On the other hand, it is trivial to see that any state in~(\ref{eq:counter}) can be obtained from $\zeta_{RQE}$ by steering on $R$.\medskip

\textit{Third example: families of entangled states}.---The families
of states considered in Refs.~\cite{discord1} and~\cite{discord2}
contain only separable states. One may wonder whether separability is
necessary for completely positive reduced dynamics. In the
Supplemental Material~\cite{suppl} we provide an explicit procedure to construct
examples of families of system-environment that, though entangled,
still satisfy condition (b) of Theorem~\ref{theo:2}.\medskip

\textit{Fourth example: derivation of the initial factorization
  condition.}---We conclude by showing how the initial factorization
condition, usually \textit{assumed} in quantum information theory, can
in fact be \textit{derived} from the quantum data-processing
inequality and one natural extra assumption, namely, that the
restriction to the system $Q$ of the initial family
$\mS\subseteq\sS(\sH_Q\otimes\sH_E)$ of system-environment states
coincides with the \textit{whole} system's state space
$\sS(\sH_Q)$. This assumption in fact implies that the tripartite
state $\rho_{RQE}$ has the form
$|\Phi^+_{R'Q}\>\<\Phi^+_{R'Q}|\otimes\tilde{\rho}_{R''E}$, where
$R\cong R'R''$ and $|\Phi^+_{R'Q}\>$ is a pure entangled state with
invertible reduced density matrix. But then, the Markov condition
$I(R;E|Q)_\rho=0$ can hold only if $R''$ is the trivial (i.e.,
one-dimensional) system, i.e.,
$\rho_{RQE}=|\Phi^+_{R'Q}\>\<\Phi^+_{R'Q}|\otimes\tilde{\rho}_{E}$,
thus recovering the initial factorization condition.

\medskip\textit{Acknowledgements.}---The author is grateful to Michele Dall'Arno, Pawe{\l} Horodecki, Masanao Ozawa, David
Reeb, and Mark M. Wilde for insightful discussions on the topic of
this paper.

\appendix


\onecolumngrid
\newpage

\setcounter{page}{1}
\renewcommand{\thepage}{Supplemental Material -- \arabic{page}/3}
\setcounter{equation}{0}
\renewcommand{\theequation}{S.\arabic{equation}}

We recall, for completeness, the conventions we adopt. In what follows, we only consider quantum
systems defined on finite dimensional Hilbert spaces $\sH$. We denote
by $\bound(\sH)$ the set of all linear operators acting on $\sH$, by
$\bound^+(\sH)\subset\bound(\sH)$ the set of all positive
semi-definite elements, and by $\sS(\sH)\subset \bound^+(\sH)$ the set
of all states $\rho$, i.e., operators with $\rho\ge 0$ and
$\Tr[\rho]=1$. The identity operator in $\bound(\sH)$ will be denoted
by the symbol $\openone$, whereas the identity map from $\bound(\sH)$
to $\bound(\sH)$ will be denoted by $\id$. In what follows, a
\emph{channel} is meant to be a linear map $\mE:\bound(\sH_Q)\to
\bound(\sH_{Q'})$, which is completely positive and trace-preserving
(CPTP) everywhere. For brevity, we will denote $\mE:\bound(\sH_Q)\to
\bound(\sH_{Q'})$ simply by $\mE:Q\to Q'$. Also, a \emph{joint evolution}
between systems $Q$ and $E$ will be meant to be an isometry
$V:\sH_Q\otimes\sH_E\to \sH_{Q'}\otimes\sH_{E'}$, or, for brevity,
$V:QE\to Q'E'$.

\section{Proof of Theorem~\ref{theo:main}}

\begin{theorem*}
	Fix a tripartite configuration $\rho_{RQE}$. The following are
	equivalent:
	\begin{enumerate}[label=\emph{(\alph*)}]
		\item For any joint system-environment evolution $V:QE\to Q'E'$, the
		reduced dynamics $Q\to Q'$ satisfy the quantum data-processing
		inequality~(\ref{eq:qdp}).
		\item $I(R;E|Q)_\rho=0$.
		\item For any joint system-environment evolution $V:QE\to Q'E'$, the
		reduced dynamics $Q\to Q'$ are CPTP, in the sense
		of~(\ref{eq:red-CPTP}).
	\end{enumerate}
\end{theorem*}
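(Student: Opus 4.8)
The plan is to establish the three implications $(c)\Rightarrow(a)$, $(b)\Rightarrow(c)$, and $(a)\Rightarrow(b)$, which together close the equivalence; only the last one carries real content. The implication $(c)\Rightarrow(a)$ is essentially a restatement of a known fact: if, for a given $V$, the reduced dynamics $Q\to Q'$ are CPTP, i.e.\ $\sigma_{RQ'}=(\id_R\otimes\mE_Q)(\rho_{RQ})$ for some channel $\mE:Q\to Q'$, then the Schumacher--Nielsen quantum data-processing theorem~\cite{schu-niel}, recalled in \reff{eq:qdp}, yields $I(R;Q)_\rho\ge I(R;Q')_\sigma$. Since this holds for every $V$, condition (a) follows.

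For $(b)\Rightarrow(c)$ I would use the Hayden--Petz structure theorem~\cite{hayden-petz} recalled above: $I(R;E|Q)_\rho=0$ produces a CPTP map $\mR:Q\to QE$ with $\rho_{RQE}=(\id_R\otimes\mR_Q)(\rho_{RQ})$. Given then an arbitrary joint evolution $V:QE\to Q'E'$, set $\mE(\cdot):=\Tr_{E'}\bigl[V_{QE}\,\mR_Q(\cdot)\,V_{QE}^\dag\bigr]$. This map is completely positive and trace-preserving, being the composition of $\mR$, of conjugation by the isometry $V$ (trace-preserving since $V^\dag V=\openone$), and of $\Tr_{E'}$; and substituting $\rho_{RQ}$ gives $(\id_R\otimes\mE_Q)(\rho_{RQ})=\Tr_{E'}\bigl[(\openone_R\otimes V_{QE})\rho_{RQE}(\openone_R\otimes V_{QE}^\dag)\bigr]=\sigma_{RQ'}$, which is exactly \reff{eq:red-CPTP}.

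The implication $(a)\Rightarrow(b)$ is where the actual content sits, but it reduces to a good choice of $V$. The point is that condition (a) must hold, in particular, for the ``do-nothing'' evolution that simply relabels the pair $QE$ as a single output system $Q'$: take $\sH_{Q'}:=\sH_Q\otimes\sH_E$, let $\sH_{E'}$ be one-dimensional (or carry a fixed pure state, if one prefers a genuinely nontrivial output environment), and let $V$ be the canonical isometry $\sH_Q\otimes\sH_E\to\sH_{Q'}\otimes\sH_{E'}$. For this $V$ one has $\sigma_{RQ'}=\rho_{RQE}$ under the identification $Q'\cong QE$, hence $I(R;Q')_\sigma=I(R;QE)_\rho$, so that \reff{eq:qdp} reads $I(R;Q)_\rho\ge I(R;QE)_\rho$. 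But the chain rule for the quantum mutual information gives $I(R;QE)_\rho=I(R;Q)_\rho+I(R;E|Q)_\rho$, while strong subadditivity of the von Neumann entropy gives $I(R;E|Q)_\rho\ge 0$. Together these force $I(R;E|Q)_\rho=0$, which is (b).

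The main obstacle, such as it is, is conceptual: one must read condition (a) correctly -- ``the reduced dynamics $Q\to Q'$ satisfy \reff{eq:qdp}'' is a constraint on the pair of states $(\rho_{RQ},\sigma_{RQ'})$ that is meaningful even when no map $\mE$ reproduces it -- and then notice that probing it with the trivial relabelling of $QE$ already saturates strong subadditivity. Beyond that observation, the proof only assembles standard ingredients: the Schumacher--Nielsen inequality, Hayden--Petz recovery, the chain rule for the mutual information, and closure of the set of channels under composition with isometric conjugation and partial trace.
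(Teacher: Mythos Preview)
Your proof is correct and follows essentially the same route as the paper's: the same trivial ``relabel $QE$ as $Q'$'' isometry for $(a)\Rightarrow(b)$, the same Hayden--Petz recovery construction $\mE=\Tr_{E'}\circ V\circ\mR$ for $(b)\Rightarrow(c)$, and the Schumacher--Nielsen inequality for $(c)\Rightarrow(a)$. The only cosmetic difference is that in $(a)\Rightarrow(b)$ the paper phrases the reverse inequality $I(R;QE)_\rho\ge I(R;Q)_\rho$ as another instance of data-processing (partial trace over $E$), whereas you invoke the chain rule plus strong subadditivity---these are of course the same statement.
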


\begin{proof}
	Let us start by proving that (a) implies (b). We consider the
quantum data-processing inequality in the form~(\ref{eq:qdp}), and we assume by hypothesis
that it holds for any joint evolution $V:QE\to Q'E'$. In particular, one
example of an evolution is one for which $Q' = QE$ (that is, the
isometry that simply embeds the joint system $QE$ of the input into the
output system $Q'$, while taking $E'$ to be a trivial system). Since
we have assumed that the data-processing inequality holds, it
follows that $I(R;Q)_\rho \ge I(R;Q')_\sigma = I(R;QE)_\rho$. On the
other hand, by tracing over system $E$ of state $\rho_{RQE}$, the
quantum data-processing inequality also gives $I(R;QE)_\rho \ge
I(R;Q)_\rho$. Thus, it follows from our assumption that
$I(R;QE)_\rho = I(R;Q)_\rho$, which is equivalent to $I(R;E|Q)_\rho
= 0$.

Let us now prove that (b) implies (c), that is, we assume that
$I(R;E|Q)_\rho=0$ and show that, for any isometry
$V:QE\to Q'E'$, there exists a CPTP map $\mE:Q\to Q'$ such that
$\sigma_{RQ'}=(\id_R\otimes\mE_Q)(\rho_{RQ})$. First of all, we
recall that the condition $I(R;E|Q)_\rho=0$ guarantees the existence of a 
CPTP map $\mR:Q\to QE$ such that $\rho_{RQE}=(\id_R\otimes\mR_Q)(\rho_{RQ})$~\cite{hayden-petz}. We therefore start from the reduced reference-system state $\rho_{RQ}=\Tr_E[\rho_{RQE}]$, on which we apply (locally on $Q$) the reconstruction map $\mR$ to obtain $\rho_{RQE}$. We then apply the
same isometry $V:QE\to Q'E'$, finally tracing over
$E'$, leaving only $Q'$. The final result, by construction,
equals exactly that one would get from the original
system-environment evolution, but it has been obtained only by
local, CPTP transformations acting on system $Q$ alone. We therefore
explicitly constructed a CPTP map $\mE:Q\to Q'$ such that
$\sigma_{RQ'}=(\id_R\otimes\mE_Q)(\rho_{RQ})$.

The final logical relation, (c) implies (a), is nothing but the
quantum data-processing theorem~\cite{schu-niel}.
\end{proof}

\section{Proof of Theorem~\ref{theo:2}}

\begin{theorem*}
	Let $\mS\subseteq\sS(\sH_Q\otimes\sH_E)$ be a post-selected family of initial
	bipartite system-environment states, possibly correlated, as in~(\ref{eq:post-sele}). The following are equivalent:
	\begin{enumerate}[label=\emph{(\alph*)}]
		\item There exists a reference system $\sH_R$ and a tripartite Markov state
		$\rho_{RQE}$ such that
		$\mS_{QE}(\rho_{RQE})=\mS$.
		\item For any joint system-environment evolution $V:QE\to Q'E'$, there exists a corresponding CPTP map $\mE:Q\to Q'$
		such that
		$\Tr_{E'}[V_{QE} \omega_{QE}V_{QE}^\dag]=\mE(\Tr_E[\omega_{QE}])$, for all
		$\omega_{QE}\in \mS$.
	\end{enumerate}
\end{theorem*}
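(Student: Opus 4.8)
The plan is to establish the equivalence by proving (a)$\Rightarrow$(b) and (b)$\Rightarrow$(a) separately, in both cases reducing everything to the tripartite Theorem~\ref{theo:main}, to the Hayden--Petz characterization of Markov states recalled in the main text~\cite{hayden-petz}, and to the elementary steering identity for a maximally entangled state. For (a)$\Rightarrow$(b): assume a reference $\sH_R$ and a tripartite Markov state $\rho_{RQE}$ with $\mS_{QE}(\rho_{RQE})=\mS$ are given, and fix an arbitrary joint evolution $V:QE\to Q'E'$; put $\sigma_{RQ'E'}:=V_{QE}\rho_{RQE}V_{QE}^\dag$. Since $I(R;E|Q)_\rho=0$, Theorem~\ref{theo:main} (implication (b)$\Rightarrow$(c)) furnishes a CPTP map $\mE:Q\to Q'$, depending only on $V$ and $\rho_{RQE}$, with $\sigma_{RQ'}=(\id_R\otimes\mE_Q)(\rho_{RQ})$. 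Now let $\omega_{QE}\in\mS$ be arbitrary; by $\mS_{QE}(\rho_{RQE})=\mS$ there are $P_R\in\bound^+(\sH_R)$ and $\mu:=\Tr[(P_R\otimes\openone_{QE})\rho_{RQE}]>0$ with $\omega_{QE}=\mu^{-1}\Tr_R[(P_R\otimes\openone_{QE})\rho_{RQE}]$. Because $V_{QE}$, the partial traces over $R$ and over $E'$, and the operator $P_R$ all act on mutually disjoint tensor factors, they commute, so $\Tr_{E'}[V_{QE}\omega_{QE}V_{QE}^\dag]=\mu^{-1}\Tr_R[(P_R\otimes\openone)\sigma_{RQ'}]=\mu^{-1}\mE(\Tr_R[(P_R\otimes\openone_Q)\rho_{RQ}])=\mE(\Tr_E[\omega_{QE}])$, the last step using $\mu^{-1}\Tr_R[(P_R\otimes\openone_Q)\rho_{RQ}]=\Tr_E[\omega_{QE}]$. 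Since the same $\mE$ works for every $\omega_{QE}\in\mS$, this is exactly condition (b).

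For (b)$\Rightarrow$(a), the idea is to exhibit the Markov state directly through a Choi-type construction. Write $\mS$ as in~(\ref{eq:post-sele}) via a completely positive map $\mP:Q_0E_0\to QE$, set $\sH_R:=\sH_{Q_0}\otimes\sH_{E_0}$, let $|\Phi^+\>_{R,Q_0E_0}$ denote a normalized maximally entangled vector on the two copies, and take $\rho_{RQE}$ to be the normalized Choi operator, $\rho_{RQE}\propto(\id_R\otimes\mP_{Q_0E_0})(|\Phi^+\>\<\Phi^+|)$, which is a legitimate state. The identity $\Tr_R[(P_R\otimes\openone)|\Phi^+\>\<\Phi^+|]\propto P_R^{T}$, together with the fact that transposition maps $\bound^+(\sH_R)$ bijectively onto $\bound^+(\sH_{Q_0E_0})$ and that a nonzero positive operator is a positive multiple of a state, shows immediately that $\mS_{QE}(\rho_{RQE})=\mS$; so it only remains to prove that $\rho_{RQE}$ is Markov. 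Apply (b) to the \emph{trivial} isometry $V:QE\hookrightarrow Q'$ with $\sH_{Q'}\cong\sH_Q\otimes\sH_E$ and $E'$ one-dimensional: this produces a single CPTP map $\mE_0:Q\to QE$ with $\omega_{QE}=\mE_0(\Tr_E[\omega_{QE}])$ for every $\omega_{QE}\in\mS$. I then claim $\rho_{RQE}=(\id_R\otimes(\mE_0)_Q)(\rho_{RQ})$: since positive operators on $\sH_R$ span $\bound(\sH_R)$, it is enough to check, for each $P_R\in\bound^+(\sH_R)$, that $\Tr_R[(P_R\otimes\openone_{QE})(\id_R\otimes\mE_0)(\rho_{RQ})]=\Tr_R[(P_R\otimes\openone_{QE})\rho_{RQE}]$; the left side equals $\mE_0(\Tr_R[(P_R\otimes\openone_Q)\rho_{RQ}])$, and $\Tr_R[(P_R\otimes\openone_Q)\rho_{RQ}]$ is, up to the scalar $\Tr[(P_R\otimes\openone)\rho_{RQE}]$, the $Q$-marginal of a state $\omega_{QE}\in\mS_{QE}(\rho_{RQE})=\mS$, so the defining property of $\mE_0$ returns precisely $\Tr_R[(P_R\otimes\openone)\rho_{RQE}]$ (the case where that scalar vanishes is trivial, both sides being $0$). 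By the Hayden--Petz theorem, the existence of a CPTP map $\mE_0:Q\to QE$ with $\rho_{RQE}=(\id_R\otimes\mE_0)(\rho_{RQ})$ forces $I(R;E|Q)_\rho=0$, i.e.\ $\rho_{RQE}$ is a Markov state, which is condition (a).

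The first implication is essentially bookkeeping on top of Theorem~\ref{theo:main}, so the substance is in (b)$\Rightarrow$(a), and I expect the main obstacle there to be recognizing the right two moves: first, that among all isometries $V$ it suffices to feed hypothesis (b) the trivial embedding $QE\hookrightarrow Q'$, which already distills out a single recovery channel $\mE_0:Q\to QE$ rebuilding every bipartite state of the family from its $Q$-reduction; and second, that choosing $\rho_{RQE}$ to be the Choi operator of $\mP$ makes the steering set of $\rho_{RQE}$ equal to $\mS$ automatically, so that the very same $\mE_0$ reconstructs the \emph{tripartite} state $\rho_{RQE}$ from $\rho_{RQ}$ — which is exactly the recovery condition of Hayden--Petz. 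Everything else (the commutation manipulations, the transpose bijection, the handling of degenerate normalizations) is routine.
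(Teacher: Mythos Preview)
Your proof is correct. For (a)$\Rightarrow$(b) you do exactly what the paper does: invoke Theorem~\ref{theo:main} to get a global CPTP map $\mE$, then push each steering operator $P_R$ through to recover the claim for every $\omega_{QE}\in\mS$.

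For (b)$\Rightarrow$(a) the construction of $\rho_{RQE}$ as the normalized Choi operator of $\mP$, and the spanning argument (positive operators on $\sH_R$ span $\bound(\sH_R)$, so it suffices to test against all $P_R$) are the same as in the paper. The organizational difference is this: the paper, for an \emph{arbitrary} $V$, uses (b) together with a dual-basis expansion $\rho_{RQE}=\sum_i L_R^i\otimes\Tr_R[(P_R^i\otimes\openone)\rho_{RQE}]$ to verify condition~(c) of Theorem~\ref{theo:main} for that $V$, and then appeals to Theorem~\ref{theo:main} to conclude $I(R;E|Q)_\rho=0$. You instead feed (b) only the trivial embedding $QE\hookrightarrow Q'$, extract a single recovery channel $\mE_0:Q\to QE$, and invoke the Hayden--Petz characterization directly. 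Your route is a legitimate shortcut---it bypasses the detour through Theorem~\ref{theo:main} and the data-processing inequality---while the paper's route has the virtue of making the link to Theorem~\ref{theo:main} (and hence to the data-processing picture) explicit at every step. Both rest on the same two ideas you identified: the Choi construction to realize $\mS$ by steering, and the spanning property of positive operators to lift the per-state identity to the full tripartite state.
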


\begin{proof}
	Let us start by proving that (a) implies (b). By Theorem~\ref{theo:main}, we know that, for any joint system-environment evolution $V:QE\to Q'E'$, there exists a CPTP map $\mE:Q\to Q'$ such that \[\Tr_{E'}[V_{QE}\;\rho_{RQE}\;V^\dag_{QE}]=(\id_R\otimes\mE_Q)(\rho_{RQ}).\] This condition implies, in particular, that \[\Tr_{RE'}[(P_R\otimes\openone_{QE})\ V_{QE}\rho_{RQE}V^\dag_{QE}]= \Tr_R[(P_R\otimes\openone_{Q})\ (\id_R\otimes\mE_Q)(\rho_{RQ})],\] for any $P_R\in\bound^+(\sH_R)$. Since, on the other hand, we assume that the family $\mS$ can be obtained from $\rho_{RQE}$ by steering, we indeed obtain (b), i.e., for any $\omega_{QE}\in\mS$, $\Tr_{E'}[V_{QE}\omega_{QE}V_{QE}^\dag]=\mE(\Tr_E[\omega_{QE}])$.

Let us now turn to the converse, i.e., (b) implies (a). Let us introduce a reference system $R$, such that $\sH_R\cong\sH_{Q_0}\otimes\sH_{E_0}$, and define a maximally entangled state $|\Phi^+_{RQ_0E_0}\>\in\sH_R\otimes\sH_{Q_0}\otimes\sH_{E_0}$. It is a well-known fact that the maximally entangled state can be used to steer any state, i.e., $\mS_{Q_0E_0}(\Phi^+_{RQ_0E_0})=\sS(\sH_{Q_0}\otimes\sH_{E_0})$. Let us now construct the state \[\rho_{RQE}:=k(\id_R\otimes\mP_{Q_0E_0})(\Phi^+_{RQ_0E_0}),\] where $\mP:Q_0E_0\to QE$ is the completely positive map post-selecting the family $\mS$, and $k>0$ is a normalizing constant. Then, by construction, $\mS_{QE}(\rho_{RQE})=\mS$.

Let now \[\{P_R^i:1\le i\le (\dim\sH_R)^2\}\] be a complete set of linearly independent elements of $\bound^+(\sH_R)$. There exists therefore a corresponding family of self-adjoint operators \[\{L_R^i:1\le i\le (\dim\sH_R)^2\}\subset\bound(\sH_R)\] such that \[X_R=\sum_i\Tr[X_RP^i_R]L_R^i,\] for any $X_R\in\bound(\sH_R)$. By linearity then, \[\rho_{RQE}=\sum_iL^i_R\otimes\Tr_R[(P^i_R\otimes\openone_{QE})\ \rho_{RQE}]\] (see also Refs.~\cite{buscemi1,buscemi2}). In particular, for any $V:QE\to Q'E'$, \[V_{QE}\rho_{RQE}V_{QE}^\dag=\sum_iL^i_R\otimes\{V_{QE}\Tr_R[(P^i_R\otimes\openone_{QE})\ \rho_{RQE}]V_{QE}^\dag\}.\] We then use the fact that $\mS=\mS_{QE}(\rho_{RQE})$ to guarantee the existence of a CPTP map $\mE:Q\to Q'$ such that \[\Tr_E\{V_{QE}\ \Tr_R[(P^i_R\otimes\openone_{QE})\rho_{RQE}]\ V_{QE}^\dag\}=\mE_Q\{\Tr_{R}[(P^i_R\otimes\openone_{Q})\ \rho_{RQ}]\},\] for all $i$. Therefore, \[\Tr_E[V_{QE}\rho_{RQE}V_{QE}^\dag]=\sum_iL^i_R\otimes\mE_Q\{\Tr_{R}[(P^i_R\otimes\openone_{Q})\ \rho_{RQ}]\}=(\id_R\otimes\mE_Q)(\rho_{RQ}).\] This last condition is nothing but condition (c) in Theorem~\ref{theo:main}, which is equivalent to $I(R;E|Q)_\rho=0$.
\end{proof}

\section{Proof that the tripartite state in~(\ref{eq:zeta}) is Markovian}

We prove here that the state
\begin{equation*}
\zeta_{RQE}=\frac
12\bigg\{|\alpha_R\>\<\alpha_R|\otimes\bigg(\frac{|0_Q\>\<0_Q|\otimes\rho^0_E+|+_Q\>\<+_Q|\otimes\rho^+_E}{2}\bigg)\bigg\}+\frac 12\bigg\{|\beta_R\>\<\beta_R|\otimes |2_Q\>\<2_Q|\otimes\rho^2_E\bigg\},
\end{equation*}
with the condition $\<\alpha_R|\beta_R\>=0$ is Markovian, i.e.,
$I(R;E|Q)_\zeta=0$. To this end, consider two orthogonal projectors on
$Q$ defined as $P^\alpha_Q:=|0_Q\>\<0_Q|+|1_Q\>\<1_Q|$ and
$P^\beta_Q:=|2_Q\>\<2_Q|$. Clearly,
\begin{equation*}
\zeta_{RQE}=\sum_{x=\alpha,\beta}(\openone_R\otimes P^x_Q\otimes\openone_E)\
\zeta_{RQE}\ (\openone_R\otimes P^x_Q\otimes\openone_E).
\end{equation*}
Moreover
\begin{equation*}
\Tr_Q\left[(\openone_R\otimes P^x_Q\otimes\openone_E)\
\zeta_{RQE}\right]=\frac 12|x_R\>\<x_R|\otimes\zeta^x_E,\qquad x=\alpha,\beta,
\end{equation*}
where $\zeta^x_E=\Tr_Q[\zeta^x_{QE}]$. Therefore, conditional on the outcome $x$, the
reference $R$ and the environment $E$ are always factorized. This is
exactly the condition for which $I(R;E|Q)_\rho=0$.

\section{General method for constructing families of correlated system-environment states that always lead to complete positive reduced dynamics}

We present here a general procedure to explicitly obtain families
 of initial system-environment states which, even if highly
 entangled, nonetheless always lead to CP reduced dynamics. Such examples can be constructed as
 follows:
 \begin{enumerate}
 	
 	\item Fix a reference Hilbert space $\sH_R$, an initial Hilbert space $\sH_{Q_0}$, and a bipartite (possibly mixed) state $\omega_{RQ_0}\in\sS(\sH_R\otimes\sH_{Q_0})$.
 	
 	\item Choose now an \textit{invertible} CPTP map $\mW:Q_0\to Q$, i.e., such that there exists another CPTP map
 	$\mW':Q\to Q_0$ with $\mW'\circ\mW=\id_{Q_0}$.  Such CPTP maps, as shown
 	in~\cite{hayden-petz}, are such that their Stinespring isometric
 	dilations $W:\sH_{Q_0}\to \sH_{Q}\otimes\sH_{E}$, when locally applied to any
 	bipartite state $\omega_{RQ_0}$, automatically produce tripartite
 	states $\rho_{RQE}:=(\openone_R\otimes
 	W_{Q_0})\omega_{RQ_0}(\openone_R\otimes W^\dag_{Q_0})$ with
 	$I(R;E|Q)_\rho=0$.
 
 	\item From the tripartite Markov state $\rho_{RQE}$ obtained above, construct the family $\mS_{QE}(\rho_{RQE})\subseteq\sS(\sH_Q\otimes\sH_E)$, by steering from $R$, as described in Eq.~\ref{eq:steering}.
 	
 \end{enumerate}
 	
 Notice that, while the condition $I(R;E|Q)_\rho=0$ implies that $\rho_{RE}=\Tr_Q[\rho_{RQE}]$ is separable~\cite{hayden-petz}, no particular restriction exists for $\rho_{QE}$. This means that it is possible that the family $\mS_{QE}(\rho_{RQE})$ contains also entangled states. Nonetheless, by Theorem~\ref{theo:2}, any evolution $V:QE\to Q'E'$, followed by the partial trace over $E'$, will always lead to CPTP reduced dynamics $Q\to Q'$, for any state in $\mS_{QE}(\rho_{RQE})$.






\begin{thebibliography}{99}
\bibitem{kraus} K. Kraus, Ann. of Physics {\bf 64}, 311 (1971).
\bibitem{kraus2}  K. Kraus, \emph{States, effects and operations: fundamental notions
    of quantum theory}. Lecture Notes in Physics {\bf 190}
  (Springer-Verlag, 1983).
  \bibitem{nc} M. A. Nielsen and I. L. Chuang,
  \emph{Quantum Computation and Quantum Information}. (Cambridge
  University Press, 2000).
\bibitem{stinespring} W. F. Stinespring, Proc. Am. Math. Soc. {\bf 6},
  211, (1955).
\bibitem{ozawa} M. Ozawa, J. Math. Phys. {\bf 25}, 79 (1984).
\bibitem{pechukas} P. Pechukas, Phys. Rev. Lett. {\bf 73}, 1060
  (1994).
\bibitem{alicki} R. Alicki, Phys. Rev. Lett. {\bf 75}, 3020 (1995).
\bibitem{initial-corr} P. Stelmachovic and V. Bu\v{z}ek, Phys. Rev. A
  {\bf 64}, 062106 (2001).
  \bibitem{initial-corr2} T. F. Jordan, A. Shaji, and
  E. C. G. Sudarshan, Phys. Rev. A {\bf 70}, 052110 (2004).
  \bibitem{initial-corr3} A. Shaji
  and E. C. G. Sudarshan, Phys. Lett. A {\bf 341}, 48 (2005).
  \bibitem{initial-corr4} H. A. Carteret, D. R. Terno, and K. Zyczkowski, Phys. Rev. A {\bf 77},
  042113 (2008).
  \bibitem{initial-corr5} E.-M. Laine, J. Piilo, and H.-P. Breuer,
  Europhys. Lett. {\bf 92}, 60010 (2010).
\bibitem{discord1} C. A. Rodriguez-Rosario, K. Modi, A. Kuah, A. Shaji,
  and E. C. G. Sudarshan, J. Phys. A {\bf 41}, 205301 (2008).
\bibitem{discord2} A. Shabani and D. A. Lidar, Phys. Rev. Lett. {\bf
    102}, 100402 (2009).
\bibitem{discord3} A. Brodutch, A. Datta, K. Modi, A. Rivas, and
  C. A. Rodriguez-Rosario, Phys. Rev. A {\bf 87}, 042301 (2013).
  \bibitem{discord4} J. M. Dominy, A. Shabani, and D. A. Lidar,  	arXiv:1312.0908 [quant-ph].
\bibitem{thermo-viol} M. H. Partovi, Phys. Rev. E {\bf 77}, 021110
  (2008).
  \bibitem{thermo-viol2} D. Jennings and T. Rudolph, Phys. Rev. E {\bf 81}, 061130
  (2010).
\bibitem{no-clon-viol} C. A. Rodr\'iguez-Rosario, K. Modi, and
  A. Aspuru-Guzik, Phys. Rev. A {\bf 81}, 012313 (2010).
  \bibitem{breuer} H.-P. Breuer, E.-M. Laine, and J. Piilo,
  Phys. Rev. Lett. {\bf 103}, 210401 (2009).
  \bibitem{Rivas} \'{A}. Rivas, S.F. Huelga, and M.B. Plenio, Phys. Rev. Lett \textbf{105}, 050403 (2010).
  \bibitem{Xu} J.-S. Xu \textit{et al.}, Nat. Comm. \textbf{4}, 2851 (2013).
  \bibitem{Darrigo} A. D'Arrigo, R. Lo~Franco, G. Benenti, E. Paladino, and G. Falci, Ann. Phys. \textbf{350}, 211 (2014).
  \bibitem{Rivas-rev} \'{A}. Rivas, S.F. Huelga, and M.B. Plenio, Rep. Prog. Phys. \textbf{77}, 094001 (2014).
\bibitem{choi} M.-D. Choi, Can. J. Math. {\bf 34}, No. 3, 520 (1972).
No 3, 520 (1972)
\bibitem{schu-niel} B. Schumacher and M. A. Nielsen, Phys. Rev. A {\bf 54}, 2629 (1996).
\bibitem{hayden-petz} P. Hayden, R. Jozsa, D. Petz, and A. Winter, Comm. Math. Phys. {\bf 246}(2),
  359-374 (2004).
\bibitem{cover} T. M. Cover and J. A. Thomas, \emph{Elements of
    Information Theory}. (Wiley, New York, 1991).
\bibitem{leifer} M. S. Leifer and R. W. Spekkens, Phys. Rev. A \textbf{88}, 052130 (2013).
  \bibitem{steering} E. Schr\"odinger, Proc. Camb. Phil. Soc. \textbf{31}, 553 (1935),
  ibid. \textbf{32}, 446, (1936). 
  \bibitem{steering2} H. M. Wiseman, S. J. Jones, A. C. Doherty, Phys. Rev. Lett. \textbf{98}, 140402 (2007). 
  \bibitem{suppl} See the Supplemental Material.
  \bibitem{buscemi1} F. Buscemi, Comm. Math. Phys. \textbf{310}, 625-647 (2012).
  \bibitem{buscemi2} F. Buscemi, Phys. Rev. Lett. \textbf{108}, 200401 (2012).
\bibitem{hayashi} M. Hayashi, \emph{Quantum Information, An
    Introduction}. (Springer-Verlag, Berlin Heidelberg, 2006). Lemma
  8.12 therein.
\end{thebibliography}
\end{document}